\theoremstyle{plain}
\newtheorem{thm}{Theorem}[section]
\newtheorem{prop}[thm]{Proposition}
\newtheorem{cor}[thm]{Corollary}
\newtheorem{res}[thm]{Result}
\newtheorem{defn}[thm]{Definition}
\newtheorem{exmp}[thm]{Example}
	\title{Bent functions and strongly regular graphs}
\author{Valentino Smaldore
\footnote{Dipartimento di Tecnica e Gestione dei Sistemi Industriali,
Universit\`{a} degli Studi di Padova, Stradella S. Nicola 3, 36100 Vicenza, Italy.
}}
\date{}
\begin{document}
\maketitle

\begin{abstract}
 The family of bent functions is a known class of Boolean functions, which have a great importance in cryptography. The Cayley graph defined on $\mathbb{Z}_{2}^{n}$ by the support of a bent function is a strongly regular graph $srg(v,k,\lambda,\mu)$, with $\lambda=\mu$. In this note we list the parameters of such Cayley graphs. Moreover, a condition is given on $(n,m)$-bent functions $F=(f_1,\ldots,f_m)$, involving the support of their components $f_i$, and their $n$-ary symmetric differences.
\end{abstract}

\section{Introduction}
 A \textit{cryptosystem} is an encryption and decryption algorithm for a message. If Alice wants to send a message $p$ to Bob, the encryption algorithm $E$ computes the \textit{ciphertext} $z$ starting from a \textit{key} $K_A$, i.e. $z=E(p,K_A)$. Bob uses the decryption algorithm $D$ to recover $p$ from a key $K_B$, i.e. $p=D(z,K_B)$. Necessarily, for all $p,K_A,K_B$, $D(E(p,K_A),K_B)=p$. Cryptosystems are called \textit{private key}, if the parties know each other and have shared information about their private keys, or \textit{public key} if it is not necessary that the two parties know each other, and they have two public keys. The best known private key algorithms are $DES$ (Data Encryption Standard) and its successor $AES$ (Advanced Encryption Standard). The reader can find more information on cryptography in \cite{CRYPTO}. One of the most important features of cryptographic algorithms is the \textit{confusion}, i.e. the relation between any bit and all the plaintext appearing at random. After the linear cryptanalysis techniques of M. Matsui \cite{Matsui}, one of the research items in cryptography was to find functions as far as possible from the linear functions, that is maximizing the Hamming distance, in order to resist to linear attacks, see \cite{BS}. Among the family of Boolean functions, such functions are called \textit{bent functions}. In \cite{srg1, srg2} a characterization of bent functions is given in terms of strongly regular graphs. Here, we give considerations on parameters of such strongly regular graphs, and a first characterization of $(n,m)$-bent functions.

\section{Preliminaries}
Let $\mathbb{Z}_{2}$ be the binary field. A \textit{Boolean function} is a function $f:\mathbb{Z}_{2}^{n}\longrightarrow\mathbb{Z}_{2}$
and to denote $f$ we will use two different notations: the \textit{classical notation}, where the input string is given by $n$ binary variables, and the \textit{$2^{n}$-tuple vector representation} $f=(f_{0},f_{1},\ldots, f_{2^{n}-1})$ where $f_i=f(b(i))$ and $b(i)$ is the binary expansion of the integer $i$. We will denote by $\Omega_{f}$ the \textit{support} of $f$, i.e.
$$\Omega_f=\{w\in\mathbb{Z}_{2}^{n}|f(w)\neq 0\}=\{w\in\mathbb{Z}_{2}^{n}|f(w)=1\}.$$
\begin{defn}
Let $l$ be a Boolean function.
 \begin{itemize}
  \item We say that $l$ is a \textit{linear function} if $\forall x,y\in\mathbb{Z}_{2}^{n}$, $l(x+y)=l(x)+l(y)$.
  \item We say that $l$ is an \textit{affine function} if it is a linear function plus a constant in $\mathbb{Z}_{2}$.
 \end{itemize}
 We denote with $\mathcal{A}$ the set of all affine functions
\end{defn}
The \textit{nonlinearity} of a Boolean function $f$ is the minimum Hamming distance between $f$ and an affine function, i.e.
$$Nl(f)=min_{\phi\in\mathcal{A}}|\{x\in\mathbb{Z}_{2}^{n}|f(x)\neq\phi(x)\}|.$$
\begin{defn}\label{bf}
 A Boolean function $f$ is called \textit{bent function} if $Nl(f)=\frac{2^{n}-2^{\frac{n}{2}}}{2}$.
\end{defn}
Note that by Definition \ref{bf} $n$ must be even. Bent functions are also called $PN$ (perfectly nonlinear).
Here we define the \textit{Abstract Fourier Transform} of a Boolean function $f$ as the rational valued function $f^*$ which defines the coefficients of
$f$ with respect to the orthonormal basis of the group characters $Q_w(x)=(-1)^{(w\cdot x)}$, where $"\cdot"$ is the standard inner product and $w\cdot x=\sum_{i=1}^{n}x_{i}w_{i}=Tr^{n}_{1}(wx)$. Then
$$f^*(w)=\frac{\sum_{x\in\mathbb{Z}_{2}^{n}}(-1)^{Tr^{n}_{1}(wx)}f(x)}{2^{n}}.$$
Note that $f^*(b(0))=\frac{|\Omega_{f}|}{2^{n}}$. The \textit{Walsh spectrum} is the set of values of $f^*(w)$. Here we investigate the spectrum in terms of a graph eigenvalue problem.

\section{The Cayley graph $Cay(\mathbb{Z}_{2}^{n},\Omega_{f})$}
\begin{defn}
 Let $\Gamma$ be a group with identity $e$.
 \begin{itemize}
  \item A \textit{Cayley subset}, is a subset $C\subseteq\Gamma$ such that $e\notin C$ and whenever $g\in C$, then $g^{-1}\in C$.
  \item The \textit{Cayley graph} $G=Cay(\Gamma,C)$ of $\Gamma$ with respect to $C$ is the graph whose vertex set is $\Gamma$, when two vertices $g$ and $h$ are adjacent if and only if $gh^{-1}\in C$.
 \end{itemize}
\end{defn}
We modify this definition by dropping the condition $e\notin C$, allowing loops in the Cayley graph.

Consider now the additive group $(\mathbb{Z}_{2}^{n},\oplus)$, where $\oplus$ is the componentwise sum. For all $w\in\mathbb{Z}_{2}^{n}$, $w^{-1}=w$, then each subset of $\mathbb{Z}_{2}^{n}$ is a Cayley subset. We can associate each Boolean function $f$ to the Cayley graph $G_{f}=Cay(\mathbb{Z}_{2}^{n},\Omega_{f})$. The vertex-set $V(G_{f})$ is the whole $\mathbb{Z}_{2}^{n}$, while the edge-set is $E(G_{f})=\{(u,v)\in\mathbb{Z}_{2}^{n}|u\oplus v\in\Omega_{f}\}=\{(u,v)\in\mathbb{Z}_{2}^{n}|f(u\oplus v)= 1\}$.
The graph has $2^{n-dim\langle\Omega_{f}\rangle}$ vertices which are the cosets of $\langle\Omega_{f}\rangle$ in $\mathbb{Z}_{2}^{n}$.
Since eigenvectors of the Cayley graph are exactly the group characters $Q_w(x)=(-1)^{Tr^{n}_{m}(wx)}$, see \cite{Cayley},the following two results give a characterization of the spectrum of $G_{f}$ from the Walsh spectrum of $f$.

\begin{res}\cite[Theorem 1]{srg1}
 The $i$-th eigenvalue $\lambda_{i}$ of the Cayley graph, which corresponds to the eigenvector $Q_{b(i)}$, is given by
 $$\lambda_{i}=\sum_{x\in\mathbb{Z}_{2}^{n}}(-1)^{Tr^{n}_{1}(b(i)x)}f(x)=2^{n}f^*(b(i)).$$
\end{res}

\begin{res}\cite[Proposition 2]{srg1}
 \begin{enumerate}
  \item The largest spectral coefficients is $\lambda_{0}=2^{n}f^*(b(0))=|\Omega_{f}|$, with multiplicity $2^{n-dim\langle\Omega_{f}\rangle}$.
  \item The number of non zero spectral coefficients is the rank of the adjacency matrix of $G_f$.
  \item If $G_f$ is connected, $f$ has a spectral coefficient equal to $-\lambda_{0}$ if and only if its Walsh spectrum is symmetric with respect to 0.
 \end{enumerate}
\end{res}

\section{Strongly regular graphs}
A strongly regular graph with parameters $(v,k,\lambda,\mu)$, denoted by $srg(v,k,\lambda,\mu)$, is a graph with $v$ vertices, each vertex lies on $k$ edges, any two adjacent vertices have $\lambda$ common neighbours and any two non-adjacent vertices have $\mu$ common neighbours. We give now some folklore results on strongly regular graphs, see \cite{brvan} for more details.

\begin{res}
   \label{fundamentalsrg}
    $k(k-\lambda-1)=\mu(v-k-1)$.
   \end{res}

 The spectrum of the adjacency matrix of an $srg(v,k,\lambda,\mu)$ is fully determined by its parameters.
\begin{res}
    A strongly regular graph $G$ with parameters $(v,k,\lambda,\mu)$ has exactly three eigenvalues: $k$, $\theta_{1}$ and $\theta_{2}$ of multiplicity, respectively, $1$, $m_{1}$ and $m_{2}$, where:
    $$\theta_{1}=\frac{1}{2}\big[(\lambda-\mu)+\sqrt{(\lambda-\mu)^{2}+4(k-\mu)}\big],$$
    $$\theta_{2}=\frac{1}{2}\big[(\lambda-\mu)-\sqrt{(\lambda-\mu)^{2}+4(k-\mu)}\big],$$
    $$m_{1}=\frac{1}{2}\Big[(v-1)-\frac{2k-(v-1)(\lambda-\mu)}{\sqrt{(\lambda-\mu)^{2}+4(k-\mu)}}\Big],$$
    $$m_{2}=\frac{1}{2}\Big[(v-1)+\frac{2k-(v-1)(\lambda-\mu)}{\sqrt{(\lambda-\mu)^{2}+4(k-\mu)}}\Big].$$
    We write the spectrum as $k,\theta_{1}^{m_{1}},\theta_{2}^{m_{2}}$.
    On the other hand, we can express the parameters of a strongly regular graph starting from its spectrum
    $$v=1+m_{1}\theta_{1}+m_{2}\theta_{2},$$
    $$\lambda=k+\theta_{1}\theta_{2}+\theta_{1}+\theta_{2},$$
    $$\mu=k+\theta_{1}\theta_{2}=\lambda-\theta_{1}-\theta_{2}.$$
\end{res}

\begin{cor}
 Consider a $srg(v,k,\lambda,\mu)$, with spectrum $k,\theta_{1}^{m_{1}},\theta_{2}^{m_{2}}$. Then $\lambda=\mu$ if and only if $\theta_{1}=-\theta_{2}$.
\end{cor}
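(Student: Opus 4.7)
The plan is to extract the conclusion directly from the preceding result, which already exhibits the relation between $\lambda-\mu$ and the non-principal eigenvalues. Concretely, I will use the identity
\[
\mu=\lambda-\theta_{1}-\theta_{2},
\]
listed in the statement giving the spectrum of an $srg(v,k,\lambda,\mu)$. Rearranging this gives $\theta_{1}+\theta_{2}=\lambda-\mu$, which immediately shows that $\lambda=\mu$ is equivalent to $\theta_{1}+\theta_{2}=0$, i.e.\ $\theta_{1}=-\theta_{2}$. Both directions of the ``if and only if'' follow from the same equation, so no case analysis is required.

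As a sanity check I would also verify the claim from the explicit formulas: adding
\[
\theta_{1}=\tfrac{1}{2}\bigl[(\lambda-\mu)+\sqrt{(\lambda-\mu)^{2}+4(k-\mu)}\bigr],\qquad
\theta_{2}=\tfrac{1}{2}\bigl[(\lambda-\mu)-\sqrt{(\lambda-\mu)^{2}+4(k-\mu)}\bigr]
\]
yields $\theta_{1}+\theta_{2}=\lambda-\mu$, and hence $\theta_{1}=-\theta_{2}\iff\lambda=\mu$. This independent derivation serves only to confirm the first argument and need not appear in the final write-up.

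There is essentially no obstacle here: the corollary is a one-line algebraic consequence of a result already stated in the preliminaries. The only thing worth a brief comment is that the square-root term $\sqrt{(\lambda-\mu)^{2}+4(k-\mu)}$ does not vanish in general (since a genuine strongly regular graph has two distinct non-principal eigenvalues), so $\theta_{1}\neq\theta_{2}$, and the condition $\theta_{1}=-\theta_{2}$ is a nontrivial constraint rather than a degeneracy. I would phrase the proof in one or two sentences, citing the formula $\mu=\lambda-\theta_{1}-\theta_{2}$ from the preceding result and concluding.
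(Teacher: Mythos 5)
Your proof is correct and matches the paper's intended derivation: the paper states this corollary without proof, as an immediate consequence of the relation $\mu=\lambda-\theta_{1}-\theta_{2}$ (equivalently $\theta_{1}+\theta_{2}=\lambda-\mu$) in the preceding result, which is exactly the identity you use.
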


\begin{res}\label{secondsrg}
The parameters $\lambda$ and $\mu$ of a $srg(v,k,\lambda,\mu)$ may be derived from its spectrum, since:
 \begin{equation}
  \begin{cases}
   \lambda=k+\theta_{1}+\theta_{2}+\theta_{1}\theta_{2}\\
   \mu=k+\theta_{1}\theta_{2}.
  \end{cases}
 \end{equation}
\end{res}

In \cite{srg1, srg2} a characterization of bent functions is given in a graph theoretical point of view.
\begin{res}\cite[Lemma 12]{srg1}
 If $f$ is a bent function, the graph $G_f$ is a strongly regular graph with $\lambda=\mu$.
\end{res}

\begin{res}\cite[Theorem 3]{srg2}
 Bent functions are the only functions whose associated Cayley graph $G_f$ is a strongly regular graph with $\lambda=\mu$.
\end{res}

\begin{prop}
 The Cayley graph $G_f$ of a bent function is exactly one of the following:
 \begin{itemize}
  \item $srg(2^{n},\frac{2^{n}+2^{\frac{n}{2}}}{2},\frac{2^{n}+2^{\frac{n}{2}}-2^{n-1}}{2},\frac{2^{n}+2^{\frac{n}{2}}-2^{n-1}}{2});$
  \item $srg(2^{n},\frac{2^{n}-2^{\frac{n}{2}}}{2},\frac{2^{n}-2^{\frac{n}{2}}-2^{n-1}}{2},\frac{2^{n}-2^{\frac{n}{2}}-2^{n-1}}{2}).$
 \end{itemize}
\end{prop}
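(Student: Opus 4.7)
The plan is to pin down the spectrum of $G_f$ first and then read off the parameters $(v,k,\lambda,\mu)$ via Result~\ref{secondsrg}. By the earlier quoted results, we already know $G_f$ is an $srg$ with $\lambda=\mu$, hence (by the corollary just stated) its non-principal eigenvalues satisfy $\theta_1=-\theta_2$. The vertex set is all of $\mathbb{Z}_2^n$, so $v=2^n$.

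The first key step is to show that the Walsh spectrum of a bent function is flat. Writing $\chi_f(x)=(-1)^{f(x)}=1-2f(x)$, one has $\widehat{\chi_f}(w)=2^n\delta_{w,0}-2^{n+1}f^*(w)$ for all $w$, and the Hamming distance to the affine function $x\mapsto w\cdot x + c$ is $2^{n-1}\mp\tfrac12\widehat{\chi_f}(w)$. Therefore the bent hypothesis $Nl(f)=(2^n-2^{n/2})/2$ rewrites as $\max_w|\widehat{\chi_f}(w)|=2^{n/2}$. Combined with Parseval's identity $\sum_w\widehat{\chi_f}(w)^2=2^{2n}$, this forces $|\widehat{\chi_f}(w)|=2^{n/2}$ for every $w\in\mathbb{Z}_2^n$.

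The second step transfers this to $G_f$. For $w\ne 0$ we get $|f^*(w)|=2^{-n/2-1}$, so by the first Result quoted in Section~3 the non-principal eigenvalues of $G_f$ are $\pm 2^{n/2-1}$, i.e.\ $\theta_1=2^{n/2-1}$ and $\theta_2=-2^{n/2-1}$. For $w=0$ the identity $\widehat{\chi_f}(0)=2^n-2|\Omega_f|$ combined with $|\widehat{\chi_f}(0)|=2^{n/2}$ yields
\[
k=|\Omega_f|=\frac{2^n\pm 2^{n/2}}{2},
\]
the two cases in the statement.

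The final step is pure substitution into Result~\ref{secondsrg}: since $\lambda=\mu=k+\theta_1\theta_2=k-2^{n-2}$, we obtain
\[
\lambda=\mu=\frac{2^n\pm 2^{n/2}}{2}-\frac{2^{n-1}}{2}=\frac{2^n\pm 2^{n/2}-2^{n-1}}{2},
\]
matching the two parameter sets in the proposition. The only genuinely non-bookkeeping point is the Parseval argument that promotes $\max|\widehat{\chi_f}|=2^{n/2}$ to flatness of the entire Walsh spectrum; once that is in hand, everything else is a direct computation from the spectral formulas collected earlier.
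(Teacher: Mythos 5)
Your proposal is correct, and its skeleton is the same as the paper's: determine the spectrum $k,\ \pm 2^{n/2-1}$ of $G_f$ and then read off $\lambda=\mu$ from Result~\ref{secondsrg}. The difference lies in how the spectrum is obtained: the paper simply cites \cite[Definition~4]{srg1} for the three eigenvalues and invokes Result~\ref{fundamentalsrg} to recover $v$, whereas you derive the flatness of the Walsh spectrum yourself (the nonlinearity hypothesis gives $\max_w|\widehat{\chi_f}(w)|=2^{n/2}$, and Parseval forces $|\widehat{\chi_f}(w)|=2^{n/2}$ for every $w$), which simultaneously yields $\theta_{1,2}=\pm 2^{n/2-1}$ and $k=|\Omega_f|=\frac{2^n\pm 2^{n/2}}{2}$, and you take $v=2^n$ directly from the definition of the vertex set. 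Your route buys self-containedness and is arguably cleaner on the value of $v$, which is immediate from the construction of $Cay(\mathbb{Z}_2^n,\Omega_f)$ rather than something to be extracted from Result~\ref{fundamentalsrg}; the paper's proof is shorter but rests entirely on the quoted spectral facts from \cite{srg1}. All of your intermediate identities (the relation $\widehat{\chi_f}(w)=2^n\delta_{w,0}-2^{n+1}f^*(w)$, the distance formula $2^{n-1}\mp\tfrac12\widehat{\chi_f}(w)$, and the final substitution $\lambda=\mu=k-2^{n-2}$) check out, so the argument is complete as written.
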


\begin{proof}
 From \cite[Definition 4]{srg1} we know the three eigenvalues $k,\theta_{1},\theta_{2}=-\theta_{1}$ of $G_f$. From \ref{secondsrg} we get the parameters $\lambda$ and $\mu$, while \ref{fundamentalsrg} allows us to compute $v=2^{n}=|\mathbb{Z}_{2}^{n}|$.
\end{proof}

\begin{exmp}
 The first strongly regular graph defined by bent functions are
 \begin{description}
  \item[$n=2$] \begin{itemize}
   \item $srg(4,3,1,1)$, i.e. the complete graph $K_4$.
   \item $srg(4,1,0,0)$, i.e. a trivial strongly regular graph made of 2 disconnected edges.
  \end{itemize}
  \item[$n=4$] \begin{itemize}
   \item $srg(16,10,6,6)$.
   \item $srg(16,10,2,2)$.
  \end{itemize}
  \item[$n=6$] \begin{itemize}
   \item $srg(64,36,20,20)$.
   \item $srg(64,28,12,12)$.
  \end{itemize}
  \item[$n=8$] \begin{itemize}
   \item $srg(256,136,72,72)$.
   \item $srg(256,120,56,56)$.
  \end{itemize}
  \item[$n=10$] \begin{itemize}
   \item $srg(1024,528,272,272)$.
   \item $srg(1024,496,240,240)$.
  \end{itemize}
 \end{description}
 Note that in each case graphs have the parameters of the complements of the affine polar graphs $VO^{\mp}(2n,2)$, which is the graph arising from a quadric $Q$ in the vector space $V=V(2n,2)$ and two points $u,v\in V$ represent adjacent vertices if and only if $Q(u-v)=0$. Note that the quadric is elliptic or hyperbolic while we consider the first or the second example, respectively. See the table of strongly regular graphs in \cite{TAB} for more details.
\end{exmp}

\section{Vectorial bent function}
 Consider now functions $F:\mathbb{Z}_{2}^{n}\longrightarrow\mathbb{Z}_{2}^{m}$, $F(x_{1},\ldots,x_{n})=(f_{1},\ldots,f_{m})$, where for each $i$, $f_{i}:\mathbb{Z}_{2}^{n}\longrightarrow\mathbb{Z}_{2}$. The set of affine vectorial functions $\mathcal{A}_{n,m}$ is defined as in the case $m=1$. We can introduce two different ways to express the nonlinearity of a vectorial Boolean function:
 \begin{equation}
  nl(F)=min_{v\in\mathbb{Z}_{2}^{n}\setminus\{0\}}Nl(F\cdot v)
 \end{equation}
 \begin{equation}
  Nl(F)=min_{\phi\in\mathcal{A}_{n,m}}|\{x\in\mathbb{Z}_{2}^{n}|F(x)\neq\phi(x)\}|
 \end{equation}
 \begin{defn}
  A $(n,m)$-bent function, or \textit{vectorial bent function}, is a function $F=(f_{1},\ldots,f_{m})$ such that $nl(F)=\frac{2^{n}-2^{\frac{n}{2}}}{2}$, or equivalently each linear combination of $f_{1},\ldots,f_{m}$ is a bent function.
 \end{defn}
In order to give graph based properties of $(n,m)$-bent functions we need now to define the set operation \textit{symmetric difference}, which is the equivalent of the logical operation $XOR$.
\begin{defn}
 The symmetric difference between two sets $A$ and $B$ is
 $$A\triangle B=(A\setminus B)\cup(B\setminus A)=(A\cup B)\setminus(A\cap B).$$
\end{defn}
\begin{prop}
The power set of any set $X$ is an elementary abelian $2$-group under the operation of symmetric difference.
\end{prop}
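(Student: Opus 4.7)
The plan is to identify the power set $\mathcal{P}(X)$ with the space of $\mathbb{Z}_2$-valued functions on $X$ via the characteristic-function map, which turns the symmetric difference into ordinary addition modulo $2$ and makes the elementary abelian $2$-group structure transparent. Concretely, to each $A\subseteq X$ I associate its indicator $\chi_A\colon X\to\mathbb{Z}_2$, and the key pointwise identity to check is
\[
\chi_{A\triangle B}(x)=\chi_A(x)+\chi_B(x)\pmod 2,
\]
which is immediate by splitting into the four cases given by $(x\in A,x\in B)\in\{0,1\}^2$. Once this is in hand, the map $A\mapsto\chi_A$ is a bijection from $\mathcal{P}(X)$ onto $\mathbb{Z}_2^X$ that intertwines $\triangle$ with componentwise $\mathbb{Z}_2$-addition.

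From here the group axioms transfer for free: associativity and commutativity of $\triangle$ follow from associativity and commutativity of addition in $\mathbb{Z}_2^X$; the identity element is $\emptyset$, corresponding to the zero function; and every element is its own inverse, since $\chi_A+\chi_A\equiv 0\pmod 2$ translates back into $A\triangle A=\emptyset$. The last property is exactly what says each non-identity element has order $2$, so the group is elementary abelian of exponent $2$.

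Alternatively, if one prefers to avoid the indicator reformulation, a direct proof would verify the four axioms by hand. Commutativity is immediate from the symmetric form $A\triangle B=(A\cup B)\setminus(A\cap B)$; the identity law $A\triangle\emptyset=A$ and the involution law $A\triangle A=\emptyset$ are one-line set identities; the only genuinely tedious step is associativity, most cleanly done by checking that both $(A\triangle B)\triangle C$ and $A\triangle(B\triangle C)$ equal
\[
\{x\in X:\ x\text{ lies in an odd number of the sets }A,B,C\},
\]
via a case analysis on the eight possible membership patterns. This associativity verification is the main (and only) obstacle, but it is routine and, as noted above, can be sidestepped entirely by the characteristic-function reduction, which I would use in the write-up.
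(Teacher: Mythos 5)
Your proof is correct, and your primary route differs from the paper's. The paper simply verifies the group axioms directly, listing the identities $A\triangle B=B\triangle A$, $(A\triangle B)\triangle C=A\triangle(B\triangle C)$, $A\triangle\emptyset=A$ and $A\triangle A=\emptyset$ without further argument (in particular, associativity is asserted rather than derived). Your main approach instead transports the structure through the bijection $A\mapsto\chi_A$ between $\mathcal{P}(X)$ and $\mathbb{Z}_2^{X}$, using the pointwise identity $\chi_{A\triangle B}=\chi_A+\chi_B \pmod 2$; this makes the only nontrivial axiom, associativity, automatic, and it exhibits the power set outright as a $\mathbb{Z}_2$-vector space, which is exactly what ``elementary abelian $2$-group'' means and is also the structure implicitly used later in the paper when $n$-ary symmetric differences of supports are taken. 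Your fallback direct verification via the ``odd number of memberships'' description is essentially the paper's argument with the missing associativity check filled in, and that description is moreover the same characterization the paper adopts for the $n$-ary symmetric difference. Either write-up is fine; the indicator-function version buys a cleaner associativity proof and a sharper statement, at the cost of introducing the (standard) bijection.
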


\begin{proof}
 The symmetric difference is commutative and associative:
 \begin{itemize}
  \item $A\triangle B=B\triangle A$;
  \item $(A\triangle B)\triangle C=A\triangle(B\triangle C)$.
 \end{itemize}
 Moreover the empty set is the identity and each element has order two:
 \begin{itemize}
  \item $A\triangle\emptyset=A$;
  \item $A\triangle A=\emptyset$.
 \end{itemize}
\end{proof}
 An elementary abelian $2$-group is also called \textit{Boolean group}, see \cite{Set} for more details.

 The symmetric difference of a collection of sets is made of elements contained in an odd number of sets. The $n$-ary symmetric difference is defined as follows;
 $$\bigtriangleup \mathcal{M}=\Big\{a\in\bigcup \mathcal{M}\Big|\sharp\{A\in M|a\in A\}=2k+1, k\in\mathbb{N}\Big\}.$$

\begin{prop}
 Consider a vectorial Boolean function $F=(f_{1},\ldots,f_{m})$, with $f_i:\mathbb{Z}_{2}^{n}\longrightarrow\mathbb{Z}_{2}$, and let $\Omega_i=\Omega_{f(i)}$ be the support of $f_i$, of $i=1,\ldots,m$. If the function $F$ is $(n,m)$-bent, then the Cayley graphs $Cay(\mathbb{Z}_{2}^{n},\bigtriangleup_{i\in I}\Omega_{i})$ are strongly regular with $\lambda=\mu$ for all index subset $I\subseteq[1,\ldots, m]$.
\end{prop}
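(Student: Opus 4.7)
The plan is to reduce the statement to the scalar bent case already handled by Result \cite[Lemma 12]{srg1}. The bridge is the identity
\[
\bigtriangleup_{i\in I}\Omega_{f_i} \;=\; \Omega_{\sum_{i\in I} f_i},
\]
where the sum is taken in $\mathbb{Z}_{2}$. This is because, by the definition of the $n$-ary symmetric difference, a point $x\in\mathbb{Z}_{2}^{n}$ lies in $\bigtriangleup_{i\in I}\Omega_{f_i}$ iff $x$ belongs to an odd number of the supports $\Omega_{f_i}$, iff $\sum_{i\in I} f_i(x)=1$ in $\mathbb{Z}_{2}$. I would state and verify this identity first, since it is the only nontrivial combinatorial ingredient; everything else is bookkeeping.

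Next, I would invoke the definition of a vectorial bent function: $F=(f_{1},\ldots,f_{m})$ is $(n,m)$-bent iff every nonzero $\mathbb{Z}_{2}$-linear combination $\sum_{i=1}^{m}\varepsilon_{i}f_{i}$ (with $\varepsilon_i\in\mathbb{Z}_{2}$, not all zero) is a bent function. For any nonempty index set $I\subseteq\{1,\ldots,m\}$, the function $g_{I}:=\sum_{i\in I} f_i$ corresponds to the choice $\varepsilon_i=1$ for $i\in I$ and $\varepsilon_i=0$ otherwise, which is a nonzero combination. Hence $g_{I}$ is a bent function.

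By the identity of the first step, the Cayley graph $Cay(\mathbb{Z}_{2}^{n},\bigtriangleup_{i\in I}\Omega_{i})$ coincides with $Cay(\mathbb{Z}_{2}^{n},\Omega_{g_I})=G_{g_I}$. Applying Result \cite[Lemma 12]{srg1} to the bent function $g_{I}$ yields that $G_{g_I}$ is a strongly regular graph with $\lambda=\mu$, which is the desired conclusion. (The empty case $I=\emptyset$ gives $g_\emptyset\equiv 0$ and an edgeless graph on $2^n$ vertices, so one should either restrict to nonempty $I$ or treat this as a degenerate strongly regular graph.)

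The proof is essentially a two-line reduction; the only place requiring genuine verification is the identification of the symmetric difference of supports with the support of the sum modulo $2$, and even there the $n$-ary case follows immediately from the binary one by induction on $|I|$ using the associativity and commutativity of $\triangle$ established in the previous proposition. I would therefore expect no real obstacle, and the exposition amounts essentially to writing this chain of equalities clearly.
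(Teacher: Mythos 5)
Your proof is correct, and in fact the paper states this proposition without giving any proof at all; your argument supplies exactly the reduction its setup intends: the identity $\bigtriangleup_{i\in I}\Omega_{f_i}=\Omega_{\sum_{i\in I}f_i}$ (odd membership count equals mod-$2$ sum equal to $1$), the definition of an $(n,m)$-bent function as one whose nonzero component combinations are all bent, and then \cite[Lemma 12]{srg1} applied to $g_I=\sum_{i\in I}f_i$. Your remark that $I=\emptyset$ must be excluded (or treated as a degenerate edgeless graph) is a genuine point of care that the paper's statement, which allows all $I\subseteq\{1,\ldots,m\}$, glosses over.
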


\section{Conclusion}
Future works should extend this notions to the case $n$ odd, by taking into account $APN$ (almost perfectly non linear) functions, i.e. functions which are as close as possible to perfect nonlinearity.

\end{document}